\documentclass[letterpaper, 10 pt, conference]{ieeeconf} 
\IEEEoverridecommandlockouts                              
\usepackage{graphicx}
\usepackage[letterpaper, left=0.68in, right=0.68in, top=0.75in, bottom=0.75in]{geometry}
\usepackage{svg}
\usepackage{listings}
\usepackage{subcaption}

\usepackage{etoolbox}

\usepackage{scalerel}

\usepackage[htt]{hyphenat}

\usepackage{siunitx}

\usepackage{pifont}

\usepackage{amsmath,textcomp}

\usepackage{tikz}
\usetikzlibrary{shapes, arrows, positioning, fit}
\usepackage{comment}
\allowdisplaybreaks
\usepackage{adjustbox}

\usepackage{url}

\usepackage{algpseudocode}
\usepackage{cite}
\usepackage{paralist}
\usepackage{amsthm}
\usepackage{color}
\usepackage{stmaryrd}
\usepackage{epsfig}
\usepackage{bm}
\usepackage{listings,lstautogobble}
\usepackage{mathtools}
\usepackage{amsfonts}
\usepackage{newpxtext,newpxmath}
\usepackage{xspace}
\usepackage{verbatim}
\usepackage{epstopdf}
\usepackage{parcolumns}

\usepackage[utf8]{inputenc}
\usepackage{calrsfs}
\usepackage{rotating}
\usepackage{booktabs}
\usepackage{multirow}
\usepackage{float}
\usepackage{makecell}
\usepackage{float}
\newcommand{\mycomment
}[1]{}

\usepackage[compact]{titlesec}
\titlespacing{\section}{0pt}{0.25ex}{0.25ex}
\titlespacing{\subsection}{0pt}{0.2ex}{0.2ex}
\titlespacing{\subsubsection}{0pt}{0.1ex}{0.1ex}

\usepackage[T1]{fontenc}

\DeclareMathAlphabet{\mathcal}{OMS}{cmsy}{m}{n}

\newcommand{\N}{{\mathbb{N}}}

\newcommand{\zono}[1]{\langle #1 \rangle}

\newtheorem{assumption}{Assumption}

\newtheorem{definition}{Definition}
 \newtheorem{theorem}{Theorem}
  \newtheorem{remark}{Remark}
  \newtheorem{lemma}{Lemma} 
\usepackage{algorithm}
\usepackage{amsmath}
\usepackage{algpseudocode}

  \renewcommand{\footnotesize}{\fontsize{8pt}{11pt}\selectfont}

\DeclareMathAlphabet{\mathcal}{OMS}{cmsy}{m}{n}

\usepackage{hyperref}
\hypersetup{
    colorlinks=true,
    linkcolor=black,
    urlcolor=gray,
}
\usepackage{balance}

\def\t{\mathtt{T}}

\makeatletter
\newcommand{\vast}{\bBigg@{4}}
\newcommand{\Vast}{\bBigg@{5}}
\makeatother

\makeatletter
\DeclareRobustCommand{\nand}{\mathbin{\mathpalette\n@and@or\land}}
\DeclareRobustCommand{\nor}{\mathbin{\mathpalette\n@and@or\lor}}

\DeclareRobustCommand{\enand}{\overline{\mathbin{\mathpalette\n@and@or\land}}}
\DeclareRobustCommand{\enor}{\overline{\mathbin{\mathpalette\n@and@or\lor}}}

\newcommand{\n@and@or}[2]{
  \vphantom{#2}
  \ooalign{$\m@th#1#2$\cr\hidewidth$\m@th#1\sim$\hidewidth\cr}
}
\makeatother

\begin{document}
\title{\LARGE \bf
Data-Driven Reachability of Nonlinear Lipschitz Systems via Koopman Operator Embeddings}

\author{Alireza Naderi, Ahmad Hafez, Abdulla Fawzy, and Amr Alanwar%
\thanks{All authors are with the TUM School of Computation, Information, and Technology (CIT), 
        Technical University of Munich (TUM), Germany. 
        {\tt\small (Email: \{alireza.naderi, a.hafez, abdulla.mohamed, alanwar\}@tum.de)}}%
}
        
%

\maketitle
\begin{abstract}
Data-driven safety verification of robotic systems often relies on zonotopic reachability analysis due to its scalability and computational efficiency. However, for nonlinear systems, these methods can become overly conservative, especially over long prediction horizons and under measurement noise. We propose a data-driven reachability framework based on the Koopman operator and zonotopic set representations that lifts the nonlinear system into a finite-dimensional, linear, state-input-dependent model. Reachable sets are then computed in the lifted space and projected back to the original state space to obtain guaranteed over-approximations of the true dynamics. The proposed method reduces conservatism while preserving formal safety guarantees, and we prove that the resulting reachable sets over-approximate the true reachable sets. Numerical simulations and real-world experiments on an autonomous vehicle show that the proposed approach yields substantially tighter reachable set over-approximations than both model-based and linear data-driven methods, particularly over long horizons.

\end{abstract}

\begin{keywords}
Data-driven reachability analysis, Koopman operator, zonotopes, nonlinear systems, safety verification.
\end{keywords}

\section{INTRODUCTION}
Satisfying safety constraints in dynamic and uncertain environments is a fundamental requirement in robotic control~\cite{ZHANG2025}. Traditional formal model-based verification approaches rely on accurate system dynamics and conservative uncertainty characterizations, which are often difficult to obtain for complex robotic platforms operating in real-world environments~\cite{Althoff2010,AlthoffDolan2014}. As a result, these methods may become overly conservative when precise models are unavailable. Data-driven safety verification has therefore emerged as a promising alternative, leveraging measured data to reason about system behavior directly; however, existing approaches often struggle to provide scalable reachability analysis with reliable safety guarantees under noise and modeling uncertainty.

\begin{figure}[h]
\centering
    \includegraphics[scale=0.46]{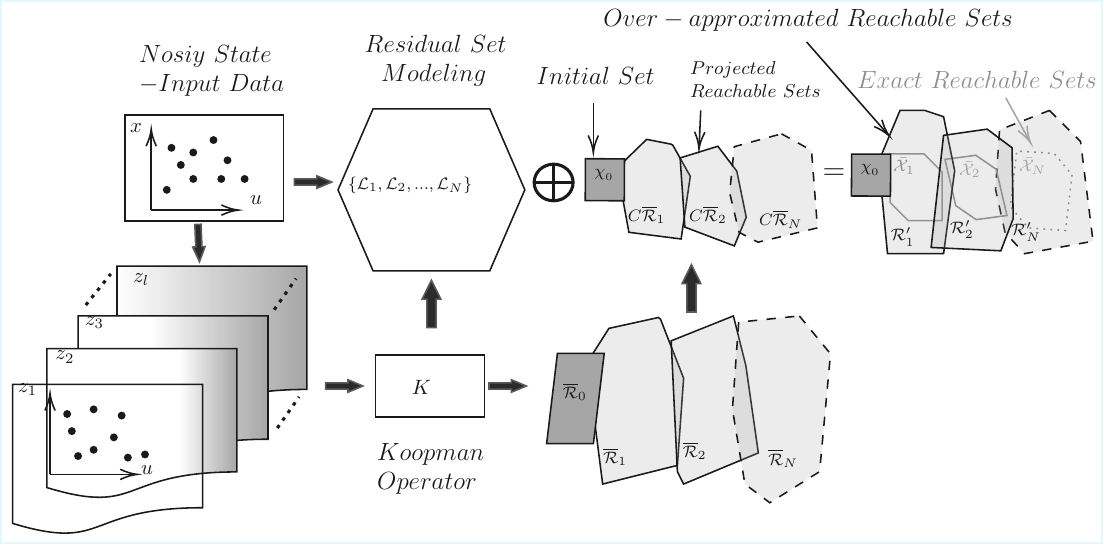}
    \caption{Over-approximated reachable sets derived from the reachable sets of the Koopman-lifted system, with residual set modeling to account for approximation errors.}
    \label{fig:koopmandiagram}
    \vspace{-7mm}
\end{figure}

Data-driven reachable set computation for safety verification in robotic and general control systems with unknown dynamics constructs over-approximations of all trajectories consistent with measured input–state data using zonotopic representations as a computationally efficient framework~\cite{10068731,alanwar2021data,akhormeh2025online,hafez2025safe,9833266}. These approaches characterize sets of models consistent with noisy measurements and propagate them forward in time, yielding reachable sets that explicitly account for bounded noise and modeling uncertainty. Zonotopic set representations rely on efficiently computable Minkowski sums and linear transformations, enabling scalable reachability analysis while maintaining conservative over-approximations of the true reachable set~\cite{girard2005reachability}. However, although these methods provide rigorous guarantees by enclosing all data-consistent trajectories, the resulting reachable sets may become overly conservative, particularly over long prediction horizons or for highly nonlinear dynamics.\par

Koopman operator theory provides a linear representation of nonlinear dynamical systems by lifting the state into a higher-dimensional space of observables~\cite{mamakoukas2019local,Str_sser_2026,proctor2018generalizing}. In this lifted space, nonlinear dynamics evolve approximately linearly, enabling improved prediction over longer horizons and facilitating analysis and control design. For instance, Koopman-based methods have demonstrated generalizable data-driven linear embeddings of nonlinear robotic systems with provable model error bounds and benefits for controller synthesis~\cite{mamakoukas2019local}.

Koopman-based linearization has also been applied to model reduction and reachability analysis of nonlinear ordinary differential equations through finite-dimensional approximations of lifted dynamics~\cite{Boreale2025}. However, such approaches typically rely on explicit knowledge of the system dynamics to derive analytical approximation-error bounds, which limits their applicability in data-driven settings where only measured trajectories are available. In contrast, data-driven Koopman reachability methods commonly employ Extended Dynamic Mode Decomposition (EDMD) or neural-network-based lifting functions to learn linear representations directly from trajectory data~\cite{nath2026scalable}. Within this framework, reachable sets are propagated in a lifted linear space and subsequently projected back onto the original state space, enabling scalable analysis of nonlinear systems. To account for model mismatch arising from learning errors, probabilistic techniques such as conformal prediction have been introduced to provide coverage guarantees at user-specified confidence levels. However, these approaches do not constitute formal methods, as they rely on statistical guarantees rather than deterministic ones.\par

Recent works have explored deep-learning-based Koopman representations for safety analysis. In~\cite{KETEMA2025100339}, a deep Koopman operator learned using Long Short-Term Memory (LSTM) networks enables safety-critical control by lifting unknown nonlinear trajectories into a linear latent space, where reachable sets are computed as zonotopic over-approximations that propagate uncertainty and noise through the learned dynamics. Despite its effectiveness, this approach suffers from low sample efficiency and high training complexity, and the learned lifting functions behave as black boxes, limiting interpretability and robustness in previously unexplored regions of the state space. While \cite{Bak2025} enables efficient reachability analysis via Koopman linearization, its guarantees apply only to the learned surrogate model and do not ensure containment of the true nonlinear trajectories. Moreover, uncertainty from data-driven identification is not explicitly accounted for, and the formulation is limited to autonomous (unforced) systems.

\emph{Contributions}: 
This paper presents a novel data-driven framework for reachability analysis of nonlinear, non-affine systems using Koopman operator embeddings, capable of handling measured state–input data corrupted by process noise. A Koopman operator is identified directly from data via least-squares regression with state-dependent and state–input-dependent lifting functions, enabling a linear representation of the underlying nonlinear dynamics. This lifted representation provides a more accurate approximation of the nonlinear system compared to conventional Linear Time-Invariant (LTI) lifted models.

To rigorously account for modeling errors, we construct data-driven over-approximations of the residual dynamics by aggregating multi-step prediction errors across all available trajectories, and extend these bounds to the entire reachable domain using a covering-radius argument. Scalable reachable sets of the lifted system are then computed by propagating convex zonotopes through the identified linear dynamics, yielding computationally efficient over-approximations that capture both process noise and linearization error.

Finally, the reachable sets are projected back to the original state space and combined with the derived modeling uncertainty sets to obtain guaranteed over-approximations of the original nonlinear system. The proposed framework bridges data-driven system identification and formal verification, providing provable safety guarantees directly from measured data. An overview of the method is illustrated in Figure~\ref{fig:koopmandiagram}. The contributions of this paper are summarized as follows:
\begin{itemize}
    \item A data-driven framework for reachability analysis of Lipschitz nonlinear, non-affine systems is introduced, leveraging Koopman operator embeddings learned directly from noisy state–input measurements.

    \item A lifting-based identification approach with state-dependent and state–input-dependent observables is developed, yielding a linear representation that improves the approximation accuracy of nonlinear dynamics compared to standard LTI models.

    \item Scalable reachable sets are computed through zonotope propagation in the lifted space and projected back to the original state space, resulting in guaranteed over-approximations of the nonlinear system behavior.

    \item We validate the proposed framework on representative nonlinear benchmark systems and a real-world autonomous vehicle, showing improved accuracy and reduced conservatism compared with existing data-driven and model-based reachability methods.
    
\end{itemize}

Readers can reproduce our results by utilizing our openly accessible repository\footnotemark, and a visual demonstration of the experiments is also available\footnotemark.

\addtocounter{footnote}{-1} 
\footnotetext{\url{https://github.com/TUM-CPS-HN/koopreach}}

\stepcounter{footnote} 
\footnotetext{\url{https://youtu.be/y6rD7-fikPc}}

The remainder of this paper is organized as follows. Section~\ref{sec:pb} presents the problem formulation and preliminaries on set representations using zonotopes. Section~\ref{sec:koopman} introduces the data-driven Koopman operator framework. Section~\ref{sec:reachability} details the proposed data-driven reachability method using a lifted system. Section~\ref{sec:evaluation} presents simulation and experimental results, and Section~\ref{sec:conclusion} concludes the paper with potential directions for future research.

\section{Problem Statement and Preliminaries} \label{sec:pb}

We start by defining some set representations that are used in the reachability analysis. 

\subsection{Set Representations}
We define the following sets:
\begin{definition}[Zonotope \cite{conf:zono1998}] \label{def:zonotopes} 
Given a center $c_{\mathcal{Z}} \in \mathbb{R}^{n_{\mathcal{Z}}}$ and $\gamma_{\mathcal{Z}} \in \mathbb{N}$ generator vectors in a generator matrix $G_{\mathcal{Z}}=\begin{bmatrix} g_{\mathcal{Z}}^{(1)}& \dots &g_{\mathcal{Z}}^{(\gamma_{\mathcal{Z}})}\end{bmatrix} \in \mathbb{R}^{n_{\mathcal{Z}} \times \gamma_{\mathcal{Z}}}$, a zonotope is defined as
\begin{equation}
	\mathcal{Z} = \Big\{ x \in \mathbb{R}^{n_{\mathcal{Z}}} \; \Big| \; x = c_{\mathcal{Z}} + \sum_{i=1}^{\gamma_{\mathcal{Z}}} \beta^{(i)} \, g^{(i)}_{\mathcal{Z}} \, ,
	-1 \leq \beta^{(i)} \leq 1 \Big\} \; .
\end{equation}
We use the shorthand notation $\mathcal{Z} = \zono{c_{\mathcal{Z}},G_{\mathcal{Z}}}$ for a zonotope. 
\end{definition}
Let $\Gamma \in \mathbb{R}^{m \times n_{\mathcal{Z}}}$ be a linear map. Then $\Gamma\mathcal{Z}= \zono{\Gamma c_{\mathcal{Z}},\Gamma G_{\mathcal{Z}}}$ \cite[p.18]{Althoff2010}.

Given two zonotopes $\mathcal{Z}_1=\langle c_{\mathcal{Z}_1},G_{\mathcal{Z}_1} \rangle$ and $\mathcal{Z}_2=\langle c_{\mathcal{Z}_2},G_{\mathcal{Z}_2} \rangle$, the Minkowski sum $\mathcal{Z}_1 \oplus \mathcal{Z}_2 = \{z_1 + z_2| z_1\in \mathcal{Z}_1, z_2 \in \mathcal{Z}_2 \}$ can be computed exactly as follows \cite{conf:zono1998}: 
\begin{equation}
     \mathcal{Z}_1 \oplus \mathcal{Z}_2 = \Big\langle c_{\mathcal{Z}_1} + c_{\mathcal{Z}_2}, [G_{\mathcal{Z}_1}, G_{\mathcal{Z}_2} ]\Big\rangle.
     \label{eq:minkowski}
\end{equation}
 We define and compute the Cartesian product of two zonotopes $\mathcal{Z}_1 $ and $\mathcal{Z}_2$ by 
\begin{align}\label{eq:cart}
\mathcal{Z}_1 \times \mathcal{Z}_2 &= \bigg\{ \begin{bmatrix}z_1 \\ z_2\end{bmatrix} \bigg| z_1 \in \mathcal{Z}_1, z_2 \in \mathcal{Z}_2 \bigg\} \nonumber\\
&= \Bigg\langle \begin{bmatrix} c_{\mathcal{Z}_1} \\ c_{\mathcal{Z}_2} \end{bmatrix}, \begin{bmatrix} G_{\mathcal{Z}_1} & 0 \\ 0 & G_{\mathcal{Z}_2}\end{bmatrix} \Bigg\rangle.
\end{align}

\subsection{Problem Statement}
We consider a discrete-time system
\begin{align}
\begin{split}
    x_{k+1} &= f(x_k,u_k)+ w_k,\\
\end{split}
    \label{eq:sysnonlingen}
\end{align}
where $f:\mathbb{R}^{n_x}\times\mathbb{R}^{n_u} \rightarrow \mathbb{R}^{n_x}$ a twice differentiable unknown function, $w_k \in \mathcal{Z}_w \subset \mathbb{R}^{n_x}$ denotes the noise bounded by a noise zonotope $\mathcal{Z}_w$, ${u_k \in \mathcal{U}_k \subset \mathbb{R}^{n_u}}$ the input bounded by an input zonotope $\mathcal{U}_k$, and ${x_0 \in \mathcal{X}_0 \subset \mathbb{R}^{n_x}}$ the initial state of the system bounded by the initial set $\mathcal{X}_0$. 

\begin{assumption}[Lipschitz Condition]
It holds that $f: \mathcal{F} \rightarrow \mathbb{R}^{n_x}$ is Lipschitz continuous, i.e., that there is some $L^\star \geq 0$ such that 
$\| f(z_k) - f(z_k^{\prime}) \|_2 \leq L^\star \| z_k - z_k^{\prime}\|_2$
holds for all $z_k, z_k^{\prime} \in \mathcal{F}$.
\label{as:lipschitz}
\end{assumption}

Reachability analysis computes the set of states $x_k$ which can be reached given a set of uncertain initial states $\mathcal{X}_0$ and a set of uncertain inputs $\mathcal{U}_k$. More formally, it can be defined as follows:
\begin{definition} [Exact Reachable Set]
The exact reachable set $\mathcal{X}_{N}$ after $N$ time steps subject to inputs ${u_k \in \mathcal{U}_k}$, $\forall k {=}\{ 0, \dots, N-1\}$, and noise $w_k \in \mathcal{Z}_w$, is the set of all states trajectories starting from initial set $\mathcal{X}_0$ after $N$ steps 
\begin{align} \label{eq:R}
        \bar{\mathcal{X}}_{N} = \big\{& x_N \in \mathbb{R}^{n_x} \, \big| x_{k+1} = f(x_k,u_k) + w_k, \nonumber\\
       & \, x_0 \in \mathcal{X}_0,
        u_k \in \mathcal{U}_k, w_k \in \mathcal{Z}_w: \nonumber\\ & \forall k \in \{0,...,N{-}1\}\big\}.
\end{align}
\end{definition}
We aim to compute an over-approximation of the exact reachable sets when the model of the system in \eqref{eq:sysnonlingen} is unknown, but input and noisy state trajectories are available.\par
Instead of having access to a mathematical model of the system, we consider $q$ input-state trajectories of different lengths $T_i+1$, denoted by $\{u^{(i)}_k\}_{k=0}^{T_i - 1}$, and $\{x^{(i)}_k\}_{k=0}^{T_i}$, $i=1, \dots, q$. We collect the set of all data sequences in the following matrices
\begin{align*}
     X &=  \begin{bmatrix} x^{(1)}_0 \dots  x^{(1)}_{T_1}  \dots  x^{(q)}_{0} \dots  x^{(q)}_{T_q}\end{bmatrix}.
 \end{align*}
Let us further denote the shifted signals
 \begin{align*}
     X_+ &= \begin{bmatrix} x^{(1)}_1\dots  x^{(1)}_{T_1} \dots  x^{(q)}_{1}  \dots  x^{(q)}_{T_q} \end{bmatrix}, \nonumber\\
     X_- &= \begin{bmatrix} x^{(1)}_0 \dots  x^{(1)}_{T_1\!-\!1}  \dots  x^{(q)}_0  \dots  x^{(q)}_{T_q\!-\!1} \end{bmatrix},\\
    U_- &= \begin{bmatrix} u^{(1)}_0  \dots  u^{(1)}_{T_1\!-\!1} \dots u^{(q)}_0  \dots  u^{(q)}_{T_q\!-\!1} \end{bmatrix}.
 \end{align*}
The total amount of data points from all available shifted signals is denoted by $T = \sum_{i=1}^{q} T_i$, and we denote the set of all available data by $D=(U_-,X)$.

\subsection{Noise Zonotope and Notations}

We define the unknown bounded noise set with Assumption \ref{ass:zon-noise}.
\begin{assumption}[Noise Zonotope]\label{ass:zon-noise}
The noise $w_k$ is contained in a known zonotope for all $k \in \mathbb{Z}_{\geq 0}$, 
i.e., $w_k \in \mathcal{Z}_w = \langle c_{\mathcal{Z}_w}, G_{\mathcal{Z}_w} \rangle$, 
where $c_{\mathcal{Z}_w} \in \mathbb{R}^{n_{\mathcal{Z}_w}}$ is the center, 
$G_{\mathcal{Z}_w} \in \mathbb{R}^{n_{\mathcal{Z}_w} \times \gamma_{\mathcal{Z}_w}}$ is the generator matrix. 
\end{assumption}
The set of real and natural numbers are denoted as $\mathbb{R}$ and $\mathbb{N}$, respectively, and $\mathbb{N}_0 = \mathbb{N}\cup \{0\}$. The transpose and Moore-Penrose pseudoinverse of a matrix $X$ are denoted as $X^\t$ and $X^\dagger$, respectively. We denote the Kronecker product by $\otimes$. We denote the element at row $i$ and column $j$ of matrix $A$ by $(A)_{i,j}$ and column $j$ of $A$ by $(A)_{.,j}$.  
For a list or vector of elements, we denote the element $i$ of vector or list $a$ by $a^{(i)}$. 
For a matrix $A = [x^{1}_0,..., x^{1}_{m_1}, \dots,x^{n_A}_0,...,x^{n_A}_{m_A}]$, we denote by $A^{(i)} = [x^{i}_0, x^{i}_1, \dots, x^{i}_{m_A}]$ the $i$-th column vector of $A$. The element-wise multiplication of two matrices is denoted by $\odot$. For a matrix $A \in \mathbb{R}^{n_A \times m_A}$, the Frobenius norm is defined as $\|A\|_F = \sqrt{\sum_{i=1}^{n_A} \sum_{j=1}^{m_A} |a_{ij}|^2}$. For a vector $x \in \mathbb{R}^{n}$, the Euclidean norm is denoted by $\|x\| = \sqrt{x^\top x}$. 
We denote the over-approximation of a reachable set $\bar{\mathcal{X}}_k$ by an interval by $\text{int}(\bar{\mathcal{X}}_k)$. 
We define also for $N$ time steps
\begin{align}
    \mathcal{F} = \cup_{k=0}^{N} (\bar{\mathcal{X}}_k \times \mathcal{U}_k).
    \label{eq:F}
\end{align}

\section{Koopman Operator Representation}\label{sec:koopman}
In this section, the Koopman operator framework is introduced for representing nonlinear dynamical systems through linear embeddings in a higher-dimensional space. To illustrate the Koopman operator main idea, consider the following simple nonlinear system
\begin{align}   
\begin{bmatrix}\label{eq:nonexample}
{x}_{1,k+1} \\
{x}_{2,k+1}
\end{bmatrix}
=
\begin{bmatrix}
\mu x_{1,k} \\
\lambda (x_{2,k} - x_{1,k}^2) + \delta u_k
\end{bmatrix},
\end{align}
where $\lambda, \mu, \delta \in \mathbb{R}$. Define the lifting (observable) functions as $y_{1,k} = x_{1,k}$, $y_{2,k} = x_{2,k}$, and $y_{3,k} = x_{1,k}^2$. In terms of these lifted coordinates, system \eqref{eq:nonexample} can be equivalently expressed as the following linear system
\begin{align} \label{eq:koopmanexample}
\begin{bmatrix}
{y}_{1,k+1} \\
{y}_{2,k+1} \\
{y}_{3,k+1}
\end{bmatrix}
=
\begin{bmatrix}
\mu & 0 & 0 \\
0 & \lambda & -\lambda \\
0 & 0 & \mu^2
\end{bmatrix}
\begin{bmatrix}
y_{1,k} \\
y_{2,k} \\
y_{3,k}
\end{bmatrix}
+
\begin{bmatrix}
0 \\
\delta \\
0
\end{bmatrix}
u_k.
\end{align}

System \eqref{eq:koopmanexample} represents a linear lifted system in the space of observables, where the nonlinear dynamics of \eqref{eq:nonexample} are captured through the lifting functions $y_{1,k}$, $y_{2,k}$, and $y_{3,k}$. This illustrates how the Koopman operator framework enables a linear representation of nonlinear dynamics in an augmented state space.

Consider the discrete-time nonlinear system in~\eqref{eq:sysnonlingen}, where \(x_k \in \mathcal{X}_k\) evolves on a manifold \(\mathcal{X}_k \subset \mathbb{R}^{n_x}\). Let \(\psi : \mathcal{X}_k \times \mathcal{U}_k \to \mathbb{R}\) denote a \emph{lifting function}. These lifting functions form an infinite-dimensional Hilbert space \(\mathcal{H}\).

The \emph{Koopman operator} \(\mathcal{K} : \mathcal{H} \to \mathcal{H}\) is a linear operator that advances all scalar-valued lifting functions \(\psi \in \mathcal{H}\) by one time step \cite{dahdah2022system}. Applying the Koopman operator to the nonlinear system \eqref{eq:sysnonlingen} yields
\begin{equation}\label{eq:koopmanlift_noise}
(\mathcal{K} \psi)(x_k, u_k) = \psi(f(x_k, u_k) + w_k, \tilde{u}_k),
\end{equation}
where \(\tilde{u}_k = u_k\) if the input exhibits state-dependent dynamics, or \(\tilde{u}_k = 0\) if the input is exogenous \cite{dahdah2022system}. Inputs generated by a controller are typically considered state-dependent.

Let the vector-valued lifting function \(\psi : \mathcal{X}_k \times \mathcal{U}_k \to \mathbb{R}^p\) be partitioned as
\begin{equation}\label{eq:lift}
\psi(x_k, u_k) =
\begin{bmatrix}
\phi(x_k) \\
\nu(x_k, u_k)
\end{bmatrix},
\end{equation}
where \(\phi : \mathcal{X}_k \to \mathbb{R}^{p_\phi}\), \(\nu : \mathcal{X}_k \times \mathcal{U}_k \to \mathbb{R}^{p_\nu}\), and \(p_\phi + p_\nu = p\). When the input is exogenous (\(\tilde{u}_k = 0\)), \eqref{eq:koopmanlift_noise} can be expressed as \cite{dahdah2022system}
\begin{equation}\label{eq:koopmn_psi}
\phi(x_{k+1}) = K \psi(x_k, u_k) + r_{\psi_k} + r_{w_k},
\end{equation}
where
\begin{equation*}
r_{w_k} = \psi(f(x_k, u_k)+w_k)-\psi(f(x_k, u_k)),
\end{equation*}
and \(r_{\psi_k}\) is the residual due to Koopman linearization and $K$ is approximation of Koopman operator. Defining \([A \; B] = K\), \eqref{eq:koopmn_psi} can be written as
\begin{equation}\label{eq:nonaffinekoopman}
\phi(x_{k+1}) = A \phi(x_k) + B \nu(x_k, u_k) + r_{\psi_k} + r_{w_k}.
\end{equation}

\begin{remark}[Bilinear Representation]
In many robotic applications, systems are governed by nonlinear control-affine dynamics \cite{iacob2024koopman}. A commonly used simplified version of \eqref{eq:sysnonlingen} assumes an affine structure, given by
\begin{equation}\label{eq:sysaffine}
x_{k+1} = f(x_k) + g(x_k) u_k + w_k.
\end{equation}
A common choice for the lifting function is
\begin{equation}\label{eq:bilinearlift}
\nu(x_k, u_k) = u_k \otimes \phi(x_k).
\end{equation}
Substituting \eqref{eq:bilinearlift} into \eqref{eq:lift} and applying the Koopman operator to \eqref{eq:sysaffine} yields
\begin{equation}
\phi(x_{k+1}) = \bar{A} \phi(x_k) + \sum_{i=1}^{n_u} \bar{B}_i \phi(x_k) u_k^{(i)} + \bar{r}_k + \bar{r}_{w_k},
\end{equation}
where \(\bar{B}_i \in \mathbb{R}^{p_\phi \times p_\phi}\),  \(\bar{r}_k\) is the residual and
\begin{equation*}
\bar{r}_{w_k} = \psi(f(x_k)+g(x_k) u_k+w_k)-\psi(f(x_k)+g(x_k) u_k).
\end{equation*}
\end{remark}

\begin{remark}[LTI Representation]\label{re:ltiKoopman}
Another common simplification of \eqref{eq:sysnonlingen} is
\begin{equation}\label{eq:sysaffine_Bcont}
x_{k+1} = f(x_k) + b u_k + w_k,
\end{equation}
where \(b \in \mathbb{R}^{n_x \times n_u}\) is constant. Selecting \(\nu(x_k, u_k) = u_k\) and applying the Koopman operator yields
\begin{equation}\label{eq:koopmanLTI}
\phi(x_{k+1}) = \tilde{A} \phi(x_k) + \tilde{B} u_k + \tilde{r}_k + \tilde{r}_{w_k},
\end{equation}
where \(\tilde{B} \in \mathbb{R}^{p_\phi \times n_u}\) \(\tilde{r}_k\) is the residual and
\begin{equation*}
\tilde{r}_{w_k} = \psi(f(x_k)+b u_k+w_k) - \psi(f(x_k)+b u_k).
\end{equation*}
\end{remark}

\subsection{Approximating the Koopman Operator From Data}
Since the Koopman operator is infinite-dimensional, a finite-dimensional approximation must be considered for numerical implementation. Such an approximation forms the foundation of most data-driven approaches used to compute the operator’s spectral properties.\par

To approximate the Koopman matrix from a dataset $D=(U_-,X)$, consider a lifted data snapshot as follows:
\begin{align*}
\Phi(X_+) &= 
\begin{bmatrix}
\phi(x^{(1)}_1), \dots, \phi(x^{(1)}_{T_1}), \dots, 
\phi(x^{(q)}_1), \dots, \phi(x^{(q)}_{T_q})
\end{bmatrix}, \\
\Phi(X_-) &= 
\begin{bmatrix}
\phi(x^{(1)}_0), \dots, \phi(x^{(1)}_{T_1-1}), \dots, 
\phi(x^{(q)}_0), \dots, \phi(x^{(q)}_{T_q-1})
\end{bmatrix}.
\end{align*}
The combined lifted data is then
\begin{align*}   
 \Psi(X_-,U_-) = 
\begin{bmatrix}\Phi(X_-)\\N(X_-,U_-)\end{bmatrix},
\end{align*}
where
\begin{align*}
N(X_-,U_-) &= 
\Big[ 
\nu(x^{(1)}_0,u^{(1)}_0), \dots, \nu(x^{(1)}_{T_1-1},u^{(1)}_{T_1-1}), \dots\\ 
& \quad \nu(x^{(q)}_0,u^{(q)}_0), \dots, \nu(x^{(q)}_{T_q-1},u^{(q)}_{T_q-1})
\Big].
\end{align*}
Then the Koopman matrix is computed through the following minimization problem

\begin{align}\label{eq:mse_koopman}
K^\star
= \arg\min_{K}
\bigl\| \Phi(X_+) - K\Psi(X_-,U_-) \bigr\|_F^2 .
\end{align}
The Least-Squares (LS) approach provides the most straightforward solution for approximating the Koopman operator to solve the optimization problem in \eqref{eq:mse_koopman} as follows 
\begin{align}\label{eq:ls_koopman}
K^\star
=  \Phi(X_+)\Psi(X_-,U_-)^\dagger  .
\end{align}
However, the LS approximation of the Koopman operator is often affected by numerical and computational limitations. In particular, computing the pseudoinverse of $\Psi(X_-,U_-)$ becomes computationally expensive when the dataset contains a large number of snapshots. To address this issue, the EDMD method reduces the dimensionality of the pseudoinverse required to compute \eqref{eq:mse_koopman} when the number of snapshots is significantly larger than the dimension of the lifted state \cite{kutz2016dynamic,Williamsetal2015}. 

\begin{remark}[Lifting Function Selection]\label{re:liftingfunction}
The choice of lifting (observable) functions is fundamental in the Koopman operator framework, as it directly affects the ability to represent nonlinear dynamics in a finite-dimensional linear form. Well-chosen observables can capture the essential structure of the underlying system and yield accurate approximations. However, in general, no systematic method exists for selecting an optimal set of lifting functions, and the choice remains problem-dependent. Common selections include polynomial bases, orthogonal functions, radial basis functions, and other nonlinear transformations of the state and input. While enriching the set of observables typically improves approximation accuracy, it also increases computational complexity, resulting in a trade-off between model fidelity and scalability.
\end{remark}

\section{Data-Driven Reachability Analysis}\label{sec:reachability}
In this section, we aim to compute an over-approximation of the exact reachable set \eqref{eq:R} of the original nonlinear system. This is achieved by conducting reachability analysis on the lifted system \eqref{eq:koopmn_psi} using the approximated Koopman operator. 
\begin{assumption}[State-Inclusive Lifting]
The lifting function $\phi:\mathcal{X}_k\to\mathbb{R}^{p_\phi}$ includes the
original system states as components, i.e., $x_k = C \phi(x_k)$, for some matrix $C \in \mathbb{R}^{n_x \times {p_\phi}}$.
\end{assumption}

\begin{assumption}[Uniformly Lipschitz and Differentiable Lifting]
\label{ass:lipschitz}
The lifting function $\psi:\mathcal{X}_k\times\mathcal{U}_k\to\mathbb{R}^p$
is continuously differentiable and Lipschitz continuous with respect to $x_k$,
uniformly over $u_k\in\mathcal{U}_k$. That is, there exists a constant
$L_\psi>0$ such that
\begin{align*}
\|\psi(x_{1,k},u_k)-\psi(x_{2,k},u_k)\|
&\le L_\psi \|x_{1,k}-x_{2,k}\|, \\
&\quad \forall x_{1,k},x_{2,k}\in\mathcal{X}_k,\ \forall u_k\in\mathcal{U}_k.
\end{align*}
\end{assumption}

\begin{lemma}[Lifted Noise Over-approximation]
\label{lm:liftednoiseapp}
Let the lifting function $\psi:\mathcal{X}_k\times\mathcal{U}_k\to\mathbb R^p$
be Lipschitz continuous in $x_k$, uniformly in $u_k$, with constant $L_\psi$.
Consider the disturbed dynamics in \eqref{eq:sysnonlingen}, then the perturbation induced in the lifted coordinates satisfies
\begin{align*}
\|\psi(f(x_k,u_k)+w_k,u_k)
-\psi(f(x_k,u_k),u_k)\|
\le L_\psi \|w_k\|.
\end{align*}
Consequently, if $w_k\in\mathcal Z_w=\zono{c_{\mathcal Z_w},G_{\mathcal Z_w}}$, the lifted residual satisfies
\[
r_{w_k}\in
\mathcal Z_{w_\psi}
=\zono{0,\,L_\psi\rho_w I_p},
\]
where
\[
\rho_w=
\|c_{\mathcal Z_w}\|+
\sum_{i=1}^{n_{\mathcal{Z}_w}}\|g^{(i)}_{\mathcal{Z}_w}\|.
\]
\end{lemma}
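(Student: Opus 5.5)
The plan has two parts: a direct Lipschitz estimate, then a norm bound on the zonotope followed by an enclosure of a Euclidean ball by an axis-aligned box.

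\emph{Step 1: the pointwise bound.} I would apply Assumption~\ref{ass:lipschitz} with $x_{1,k}=f(x_k,u_k)+w_k$ and $x_{2,k}=f(x_k,u_k)$, keeping the same input $u_k$ in both arguments. Uniformity of $L_\psi$ over $u_k\in\mathcal U_k$ then gives
\[
\|\psi(f(x_k,u_k)+w_k,u_k)-\psi(f(x_k,u_k),u_k)\|\le L_\psi\|(f(x_k,u_k)+w_k)-f(x_k,u_k)\|=L_\psi\|w_k\|.
\]
This step needs both points to lie in the domain on which the Lipschitz constant holds. I would state this implicitly: $\mathcal X_k$ is taken to contain all successor states, noisy or not, as in the definition of $\mathcal F$. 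The residual $r_{w_k}$ in \eqref{eq:koopmn_psi} is exactly this difference, with the input argument $u_k$ understood, so the same bound holds for $\|r_{w_k}\|$.

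\emph{Step 2: bounding $\|w_k\|$ over the zonotope.} Any $w_k\in\mathcal Z_w$ can be written as $w_k=c_{\mathcal Z_w}+\sum_i\beta^{(i)}g^{(i)}_{\mathcal Z_w}$ with $|\beta^{(i)}|\le1$. The triangle inequality then gives
\[
\|w_k\|\le\|c_{\mathcal Z_w}\|+\sum_i|\beta^{(i)}|\,\|g^{(i)}_{\mathcal Z_w}\|\le\rho_w,
\]
where the sum runs over all generators (the index bound in the statement should be read as $\gamma_{\mathcal Z_w}$). Combining this with Step 1 yields $\|r_{w_k}\|\le L_\psi\rho_w$.

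\emph{Step 3: from the ball to the zonotope.} For any vector $r$, each component satisfies $|r^{(j)}|\le\|r\|_\infty\le\|r\|_2\le L_\psi\rho_w$. So $r$ lies in the box $\{r:|r^{(j)}|\le L_\psi\rho_w\ \forall j\}$. By Definition~\ref{def:zonotopes}, this box is the zonotope $\zono{0,L_\psi\rho_w I_p}$: choose $\beta^{(j)}=r^{(j)}/(L_\psi\rho_w)\in[-1,1]$. The degenerate case $\rho_w=0$ is trivial, since then $r=0$.

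I expect no real obstacle. The only delicate points are the domain issue noted in Step 1 and correctly interpreting the input argument of $\psi$ in $r_{w_k}$. Note also that the result is conservative: the ball is enclosed by a box, and the noise is centered around zero in the lifted space even when $c_{\mathcal Z_w}\neq0$, which is handled by including $\|c_{\mathcal Z_w}\|$ in $\rho_w$.
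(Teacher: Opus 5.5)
Your proposal is correct and follows the paper's proof: apply the uniform Lipschitz bound with the same input $u_k$ in both arguments, then bound $\|w_k\|\le\rho_w$ using the zonotope representation and the triangle inequality. Your Step~3, the enclosure $|r^{(j)}|\le\|r\|_2\le L_\psi\rho_w$ of the ball in the box $\zono{0,L_\psi\rho_w I_p}$, spells out what the paper compresses into ``which yields the stated zonotopic over-approximation,'' and your remarks on the domain of $\psi$ and on reading the generator sum as running to $\gamma_{\mathcal Z_w}$ are sound corrections of the paper's notation.
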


\begin{proof}
Let $\bar x_{k+1}=f(\bar{x}_k,u_k)$. Then
$x_{k+1}=\bar x_{k+1}+w_k$.
Since $\psi$ is Lipschitz continuous in $x_k$, uniformly in $u_k$, we have
\[
\|\psi(\bar x_{k+1}+w_k,u_k)
-\psi(\bar x_{k+1},u_k)\|
\le L_\psi\|w_k\|.
\]
Hence, the lifted perturbation is bounded by $L_\psi\|w_k\|$.

If $w_k\in\mathcal Z_w=\zono{c_{\mathcal Z_w},G_{\mathcal Z_w}}$, then according to Definition \ref{def:zonotopes}, we have  
\[
w_k=c_{\mathcal Z_w}+\sum_{i=1}^{n_{\mathcal{Z}_w}}\beta^{(i)}g^{(i)}_{\mathcal{Z}_w},
\qquad -1 \leq \beta^{(i)} \leq 1.
\]
We obtain the following upper bound on $\|w_k\|_2$: 
\[
\|w_k\|
\le
\|c_{\mathcal Z_w}\|
+
\left\|
\sum_{i=1}^{n_{\mathcal{Z}_w}}\beta^{(i)}g^{(i)}_{\mathcal{Z}_w}
\right\|.
\]
Using the triangle inequality and $|\beta^{(i)}|\le1$ gives
\[
\|w_k\|
\le
\|c_{\mathcal Z_w}\|
+
\sum_{i=1}^{n_{\mathcal{Z}_w}}\|g^{(i)}_{\mathcal{Z}_w}\|
=
\rho_w.
\]
Therefore,
\[
\|r_{w_k}\|
\le
L_\psi\rho_w,
\]
which yields the stated zonotopic over-approximation.
\end{proof}


The following theorem establishes that the reachable sets computed by Algorithm \ref{alg:LipReachability} constitute an over-approximation of the exact reachable set.

\begin{theorem}[Reachable set over-approximation]
\label{th:reachdisnonlin}
Let $D = (U_-,X)$ be a dataset generated by the nonlinear system \eqref{eq:sysnonlingen}. 
Assume Assumption~\ref{ass:lipschitz} holds and that the sets $\bar{\mathcal{L}}_k$, $\mathcal{L}_k$, $\mathcal{Z}_{w_\psi}$, and $\mathcal{Z}_\epsilon$ respectively bound the linearization error of the nonlinear lifted system, the Koopman modeling residual computed over the sampled trajectories, the lifted process noise, and the uncertainty arising from the finite sampling density of the dataset. Then the reachable sets $\mathcal R'_k$ computed by Algorithm~\ref{alg:LipReachability} satisfy $\bar{\mathcal X_k} \subseteq \mathcal R'_k, \qquad \forall k=1,\dots,N.$
\end{theorem}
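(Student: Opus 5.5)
The argument is an induction on $k$. The invariant is that every true state $x_k \in \bar{\mathcal X}_k$ satisfies $\phi(x_k) \in \mathcal R_k$, where $\mathcal R_k$ is the lifted reachable zonotope maintained by Algorithm~\ref{alg:LipReachability}. It is also convenient to carry the lifted input-dependent observables $\nu(x_k,u_k)$ for $u_k\in\mathcal U_k$ in a set $\mathcal V_k$, obtained from the Cartesian product \eqref{eq:cart} of the current lifted set with $\mathcal U_k$ together with the linearization enclosure $\bar{\mathcal L}_k$. The state-space sets are then recovered by projection, $\mathcal R'_k = C\mathcal R_k$ (Minkowski-augmented as in the algorithm). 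Since $x_k = C\phi(x_k)$ by the State-Inclusive Lifting assumption and linear maps preserve inclusion, $\phi(x_k)\in\mathcal R_k$ immediately gives $x_k\in\mathcal R'_k$.

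\textbf{Base case.} Take $\mathcal R_0$ to be the lifted image of $\mathcal X_0$. Because $\phi$ is nonlinear, $\phi(\mathcal X_0)$ is not a zonotope. I would enclose it by a first-order Taylor expansion of $\phi$ around the center of $\mathcal X_0$, plus a Lagrange remainder bounded by $\bar{\mathcal L}_0$. This is legitimate since $\psi$ is $C^1$ and Lipschitz (Assumption~\ref{ass:lipschitz}), which gives a bounded remainder. The same argument encloses $\nu(x_k,u_k)$ for $(x_k,u_k)\in\bar{\mathcal X}_k\times\mathcal U_k$ at each step. This is exactly the role the hypothesis assigns to $\bar{\mathcal L}_k$.

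\textbf{Inductive step.} Fix any $x_k\in\bar{\mathcal X}_k$, $u_k\in\mathcal U_k$ and $w_k\in\mathcal Z_w$, and write \eqref{eq:nonaffinekoopman} as
\begin{equation*}
\phi(x_{k+1}) = A\phi(x_k) + B\nu(x_k,u_k) + r_{\psi_k} + r_{w_k}.
\end{equation*}
The terms are handled one at a time:
\begin{itemize}
\item By the induction hypothesis and the linearization enclosure, $A\phi(x_k)+B\nu(x_k,u_k)\in A\mathcal R_k\oplus B\mathcal V_k$.
\item By Lemma~\ref{lm:liftednoiseapp}, $r_{w_k}\in\mathcal Z_{w_\psi}$.
\item For the model residual, the data only certify $r_{\psi}$ at sampled points, i.e.\ $r_\psi(z_j)\in\mathcal L_k$ for all data pairs $z_j=(x_j,u_j)$.
\end{itemize}
Closing the gap between sampled points and the whole domain is the main obstacle. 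For an arbitrary $z=(x_k,u_k)\in\mathcal F$, I would pick the nearest data point $z_j$ with $\|z-z_j\|\le\delta$, where $\delta$ is the covering radius. Then write $r_\psi(z)=r_\psi(z_j)+(r_\psi(z)-r_\psi(z_j))$. Since $r_\psi(z)=\phi(f(z))-K\psi(z)$, it is Lipschitz with constant at most $L_\psi L^\star+\|K\|L_\psi$, using Assumption~\ref{as:lipschitz} and Assumption~\ref{ass:lipschitz} (and I would assume $\phi$ shares the constant $L_\psi$). Hence the difference term lies in a ball of radius $\epsilon=(L_\psi L^\star+\|K\|L_\psi)\delta$, boxed into $\mathcal Z_\epsilon$.

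One subtlety is that data residuals are computed from noisy states $x_{j+1}$. The data residual therefore equals $r_\psi(z_j)+r_{w_j}$, so the true $r_\psi(z_j)$ lies in $\mathcal L_k\oplus(-\mathcal Z_{w_\psi})$. I would either check that $\mathcal L_k$ was built with this enlargement or add it explicitly. The assumption that $\mathcal Z_\epsilon$ covers the finite-density uncertainty over $\mathcal F$ is what licenses this step.

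Combining these gives
\begin{equation*}
\phi(x_{k+1})\in A\mathcal R_k\oplus B\mathcal V_k\oplus\mathcal L_k\oplus\mathcal Z_\epsilon\oplus\mathcal Z_{w_\psi}=\mathcal R_{k+1}.
\end{equation*}
This uses only exact Minkowski sums \eqref{eq:minkowski} and linear maps, so no inclusion is lost. Projecting by $C$ closes the induction and yields $\bar{\mathcal X}_{k+1}\subseteq\mathcal R'_{k+1}$ for all $k=1,\dots,N$.
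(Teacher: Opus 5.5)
\textbf{There is a genuine gap: your induction is about a different construction than Algorithm~\ref{alg:LipReachability}, and with the algorithm's sets it cannot go through.} You keep the invariant $\phi(x_k)\in\mathcal R_k$ in the lifted space and feed $\mathcal L_k\oplus\mathcal Z_\epsilon$ into the lifted recursion as a one-step residual bound.

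In the algorithm, $\mathcal L_k$ is the interval hull of $R_i^{(k)}=C\big((Z_+)^{(i)}_{.,j+k}-\hat z_{k+1}\big)$. These are \emph{$C$-projected, $k$-step open-loop} prediction errors of $z_{k+1}=Az_k+B\nu(z_k,u_k)$, not one-step lifted residuals $r_\psi(z_j)$. Both $\mathcal L_k$ and $\mathcal Z_\epsilon$ live in $\mathbb R^{n_x}$. They enter only once, after projection, via $\mathcal R'_{k+1}=C\bar{\mathcal R}_{k+1}\oplus\mathcal L_k\oplus\mathcal Z_\epsilon$. The lifted recursion $\bar{\mathcal R}_{k+1}$ contains only $\bar{\mathcal L}_k$ and $\mathcal Z_{w_\psi}$. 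This has three consequences:
\begin{itemize}
\item Your $\mathcal R_{k+1}$ is not $\bar{\mathcal R}_{k+1}$. Indeed, $A\mathcal R_k\oplus B\mathcal V_k\oplus\mathcal L_k\oplus\mathcal Z_\epsilon$ is not even dimensionally consistent.
\item For the algorithm's $\bar{\mathcal R}_k$, the claim $\phi(x_k)\in\bar{\mathcal R}_k$ is false in general, since no residual is ever added there.
\item The lifted one-step induction cannot be repaired with what the algorithm computes. It needs a bound on the full residual $r_{\psi_k}\in\mathbb R^{p_\phi}$, including the non-state observables (e.g.\ $x_1^2$, $x_1x_2$) that $A$ and $B$ mix back into the state coordinates at the next step. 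Only projected residuals are ever available.
\end{itemize}
Your final step also fails as stated. $x_k\in C\mathcal R_k$ does not imply $x_k\in C\mathcal R_k\oplus\mathcal L_{k-1}\oplus\mathcal Z_\epsilon$, because the interval $\mathcal L_{k-1}=[\underline l_{k-1},\overline l_{k-1}]$ need not contain $0$.

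\textbf{The missing idea is the paper's decomposition.} The modeling error is an \emph{accumulated} error added once after projection, not a one-step error fed back through $A$ and $B$. The steps are:
\begin{itemize}
\item Propagate only the nominal lifted model. Linearize $g(z,u)=Az+B\nu(z,u)$ jointly in the lifted state and the input about $(z^\star,u^\star)$. Its Lagrange remainder lies in $\bar{\mathcal L}_k$, obtained from Hessian bounds over the interval hull of $\bar{\mathcal R}_k$.
\item Add the lifted noise $\mathcal Z_{w_\psi}$ from Lemma~\ref{lm:liftednoiseapp}. This propagated set, $\bar{\mathcal R}_{k+1}$, encloses the nominal prediction started at $\phi(x_0)$ under the same inputs. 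This is the only induction needed. Note that $\bar{\mathcal L}_k$ is not an enclosure of $\phi$ or of $\nu(\cdot,u)$ as functions of $x$, which is how you use it. Your separate set $\mathcal V_k$ is sound, but it is not what the algorithm builds.
\item Write the true state as $C$ applied to that prediction plus the $k$-step mismatch. This mismatch is exactly what $R_i^{(k)}$ samples, so on the data it lies in $\mathcal L_k$.
\item Use the covering-radius argument, $\|f(s_k)-f(s_i)\|\le L^\star\delta$ for the nearest sample, to obtain $\mathcal Z_\epsilon$. By hypothesis, $\mathcal Z_\epsilon$ covers the gap to unsampled points of $\mathcal F$.
\item Project with $C=[I_{n_x}~0]$ and add $\mathcal L_k\oplus\mathcal Z_\epsilon$ once. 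This gives $x_{k+1}\in\mathcal R'_{k+1}$.
\end{itemize}

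Your observation that data residuals contain the realized noise is a fair one. The paper does not treat it explicitly; it absorbs it into the hypothesis that $\mathcal L_k$ bounds the modeling residual. The concern applies equally to the multi-step formulation.
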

\begin{proof}
We prove the result in four steps by separately over-approximating each source of uncertainty with a zonotopic set, and then combining these bounds in the reachability recursion.

Let's start with computing the over-approximation set of the linearization error of the nonlinear lifted system $\bar{\mathcal{L}}_k$ by defining $z_k = \phi(x_k)$. Then, from \eqref{eq:nonaffinekoopman} we obtain
\begin{align} \label{eq:gkoopman}
     g(z_k,u_k)=A z_k + B \nu(z_k, u_k).
\end{align}
Since the lifting function is continuously differentiable under Assumption~\ref{ass:lipschitz}, the function $g(z_k,u_k)$ is also continuously differentiable. A local linearization of \eqref{eq:gkoopman} is obtained via a Taylor series expansion around the linearization point $\bar{z}^\star=\begin{bmatrix}z^\star \\u^\star \end{bmatrix}$,
\begin{align*}
g(\bar{z}_k) =& g(\bar{z}^\star) + 
\frac{\partial g(\bar{z}_k)}{\partial \bar{z}_k}\Big|_{\bar{z}_k=\bar{z}^\star}
(\bar{z}_k - \bar{z}^\star)+ \dots .
\end{align*}

The infinite Taylor expansion \cite{conf:taylor} can be written as a first-order approximation plus a Lagrange remainder term $L(\bar{z})$ \cite[p.65]{Althoff2010}:
\begin{align}
g(\bar{z}_k) = g(\bar{z}^\star) +
\frac{\partial g(\bar{z}_k)}{\partial \bar{z}_k}\Big|_{\bar{z}_k=\bar{z}^\star}
(\bar{z}_k - \bar{z}^\star)+\bar{L}(\bar{z}_k).
\label{eq:linfL}
\end{align}

Rewriting \eqref{eq:linfL} yields
\begin{align*}
g(z_k,u_k) =& g(z^\star,u^\star)
+ \underbrace{\frac{\partial g}{\partial z_k}\Big|_{z^\star,u^\star}}_{\bar{A}} (z_k - z^\star)
\nonumber\\
&+ \underbrace{\frac{\partial g}{\partial u_k}\Big|_{z^\star,u^\star}}_{\bar{B}} (u_k - u^\star)
+ \bar{L}(z_k,u_k),
\end{align*}
which can be expressed compactly as
\begin{align}
g(z_k,u_k) =
M \begin{bmatrix}1\\z_k-z^\star\\ u_k-u^\star\end{bmatrix}
+\bar{L}_k,
\label{eq:fz_incl}
\end{align}
where $M=\begin{bmatrix}g(z^\star,u^\star) & \bar{A} & \bar{B}\end{bmatrix}$.
The matrices $\bar{A}$ and $\bar{B}$ follow from \eqref{eq:gkoopman}:
\begin{align*}\label{eq:koopmansysmatrix}
\bar{A} =& A + B \frac{\partial \nu}{\partial z_k}\Big|_{z^\star,u^\star}, \,\,  \,\, 
\bar{B} = B \frac{\partial \nu}{\partial u_k}\Big|_{z^\star,u^\star}.
\end{align*}

The Taylor remainder in \eqref{eq:fz_incl} can be computed as $\bar{{L}}_k$ \cite{Althoff2010}:
\begin{equation}
\bar{{L}}_k =
\frac{1}{2}(\bar{z}_k-\bar{z}^\star)^\top H(\xi_k)(\bar{z}_k-\bar{z}^\star),
\end{equation}
where $\xi_k \in \Big\{ \bar{z}^\star + \alpha (\bar{z}_k - \bar{z}^\star) \;\big|\; \alpha \in [0,1] \Big\}$ and $H(\xi_k)$ denotes the Hessian of the nonlinear dynamics. Hence
$\bar{L}_k\in \bar{\mathcal{L}}_k$ only bounds the linearization error of nonlinear lifted system \eqref{eq:gkoopman} (see Remark \ref{re:Llinkoop}). In practice, following~\cite{althoff2015introduction,Althoff2010}, $\bar{\mathcal{L}}_k$ is computed by evaluating interval bounds of the Hessian over the interval hull of the reachable set $\bar{\mathcal{R}}_k$. This ensures that $\bar{L}_k \in \bar{\mathcal{L}}_k$ rigorously bounds the linearization error of the lifted system \eqref{eq:gkoopman}, even when the exact Hessian at individual points $\xi_k$ is unknown.

 To bound the Koopman modeling residual error from data, denoted by $\mathcal{L}_k$, we consider \eqref{eq:gkoopman}, which can be written as
\begin{align} \label{eq:zgkoopman}
     z_{k+1}=A z_k + B \nu(z_k, u_k).
\end{align}
We evaluate the difference between the multi-step predictions of the nonlinear lifted Koopman model in \eqref{eq:zgkoopman} and the true lifted trajectories in the dataset over a prediction horizon $N$. This procedure is carried out in Algorithm~\ref{alg:LipReachability}, lines~\ref{alg:startfor}--\ref{alg:endfor}, where residuals are computed for all admissible trajectories in the lifted dataset. Subsequently, the computed residuals are over-approximated using interval zonotopes, as implemented in lines~\ref{alg:startforinterval}--\ref{alg:endforinterval} of Algorithm~\ref{alg:LipReachability}. The resulting sets $\mathcal{L}_k$, defined in line~\ref{alg:Lk}, provide a data-driven over-approximation of the Koopman modeling error over the prediction horizon $N$.

Since a reachable state may not coincide exactly with a sampled point, an additional uncertainty due to the finite coverage of the dataset must be taken into account. The covering radius $\delta$ guarantees that every state $s_k=[x_k^\top,u_k^\top]^\top\in\mathcal{F}$ admits a neighboring sample $s_i\in D$ satisfying $\|s_k-s_i\|\le\delta$. Hence, by Assumption~\ref{as:lipschitz},
\[
\|f(s_k)-f(s_i)\|
\le
L^\star\|s_k-s_i\|
\le
L^\star\delta.
\]
This additional uncertainty is captured by the zonotope
\[
\mathcal{Z}_\epsilon =
\zono{0,\mathrm{diag}(L^\star\delta/2,\ldots,L^\star\delta/2)},
\]
which accounts solely for the finite sampling density of the dataset.
The effect of noise on the lifted system is bounded using Lemma~\ref{lm:liftednoiseapp}. In particular, the lifted disturbance satisfies
\[
r_{w_k} \in \mathcal Z_{w_\psi}.
\]
Finally, we propagate the reachable set of the linearized lifted system subject to noise, as computed in line~\ref{ln:alglipRprime} of Algorithm~\ref{alg:LipReachability}. The resulting reachable set in the lifted space is then projected onto the original state space using $C=[I_{n_x}~0]$. Subsequently, a Minkowski sum with the Koopman modeling residual error set and the dataset covering error set is performed (cf. Algorithm~\ref{alg:LipReachability}), yielding an over-approximation of the exact reachable set of the nonlinear system:
\begin{align}\label{eq:overapproximationoriginal}
x_{k+1} \in
C \Big(
M \begin{bmatrix}1\\z_k-z^\star\\ u_k-u^\star\end{bmatrix}
\oplus \bar{\mathcal{L}}_k
\oplus \mathcal Z_{w_\psi}
\Big)
\oplus \mathcal{L}_k
\oplus \mathcal{Z}_\epsilon.
\end{align}
Therefore, the reachable sets generated by Algorithm~\ref{alg:LipReachability} satisfy
\[
\bar{\mathcal X}_k \subseteq \mathcal R'_k, \qquad \forall k,
\]
which completes the proof.
\end{proof}
\begin{remark}\label{re:Llinkoop}
We linearize \eqref{eq:gkoopman} in order to enable the use of zonotopes for reachable set propagation, rather than to approximate the Koopman modeling error. Consequently, we introduce $\bar{\mathcal{L}}_k$ to over-approximate the linearization remainder. For example, in the case of a Koopman LTI representation (see Remark~\ref{re:ltiKoopman}), where $\nu(x_k,u_k)=u_k$, the linearization error satisfies $\bar{\mathcal{L}}_k=0$, since the lifted model is fully linear.
\end{remark}

\begin{remark}[Computing $L^\star$ and $\delta$]
To estimate the Lipschitz constant $L^\star$ and the covering radius $\delta$, we follow the method presented in \cite[Remark~9]{10068731} using the available data $D$.
\end{remark}

\begin{algorithm}[t]
  \caption{Reachability Analysis of the Lifted System}
  \label{alg:LipReachability}
  \textbf{Input}: input-state trajectories $D = (U_-,X)$, initial set $\mathcal{X}_{0}$, process noise zonotope $\mathcal{Z}_{w}$ and Lipschitz constants $L$ and $L_{\psi}$, covering radius $\delta$, and input zonotope $\mathcal{U}_k$, $\forall k = 0, \dots,N-1$.\\
  \textbf{Output}: reachable sets $\bar{\mathcal{R}}_{k}, \forall k = 1, \dots,N$. 
  \begin{algorithmic}[1]
    \State $\bar{\mathcal{R}}_{0} =\phi(\mathcal{X}_{0})$,
    \State $\mathcal{Z}_\epsilon = \zono{0,\textup{diag}({L}^{(1)} \delta/2,\dots,{L}^{(n_x)} \delta/2)}$, \label{ln:alglipZeps}
    
    \For{Each trajectory $i$ }\label{alg:startfor}
        \For{Each column $j$}
            \State Initialization: $\hat{z}_1 = (Z_-)_{.,j}^{(i)}$,
            \For{Each prediction step $k$ until $N$}
                \State Prediction: $\hat z_{k+1} = A \hat z_k + B \, \nu(\hat z_k, (U_{-})^{(i)}_{.,j+k})$,
                \State Residual: $R_i^{(k)} = C \big( (Z_+)_{.,j+k}^{(i)} - \hat z_{k+1} \big)$,
            \EndFor
        \EndFor
    \EndFor    \label{alg:endfor}
    \For{Each prediction step $k$ until $N$}\label{alg:startforinterval}
        \State $\underline{l}_k = \underset{i}{\min} R_i^{(k)}$,
        \State $\overline{l}_k = \underset{i}{\max} R_i^{(k)}$,
        \State $\mathcal{L}_{k} = \text{zonotope}(\underline{l}_k,\,\overline{l}_k)$,\label{alg:Lk}
    \EndFor \label{alg:endforinterval}
    
    \For{$k = 0:N-1$}
        \State Update:~$M \gets (x^\star, u^\star)$,
        \State $\bar{\mathcal{R}}_{k+1} = M \Big(( \bar{\mathcal{R}}_{k} -x^\star )\times (\mathcal{U}_k - u^\star )\Big) \oplus \bar{\mathcal{L}}_{k} \oplus  Z_{w_\psi}$, \label{ln:alglipRprime}
        \State $\mathcal{R}^\prime_{k+1}= C\bar{\mathcal{R}}_{k+1} \oplus \mathcal{L}_{k} \oplus \mathcal{Z}_\epsilon$.\label{ln:algkooporiginal}
    \EndFor
  \end{algorithmic}
\end{algorithm}

\section{Evaluation}\label{sec:evaluation}
The performance of the proposed method for over-approximating the exact reachable set is evaluated through two numerical examples and a real-world experiment.

\subsection{Example 1: Affine Lipschitz dynamic system}
We consider a scenario where we have collected data and we do not know the underlying system type. We apply the proposed data-driven reachability analysis to a discrete stirred tank reactor (CSTR) simulation model \cite{conf:nonlinearexample} with the following dynamic
\begingroup
\makeatletter
\def\f@size{9}\check@mathfonts
\def\maketag@@@#1{\hbox{\m@th\large\normalfont#1}}%
\begin{align*}
x_{1,k}=& 
\frac{\Big(1-\frac{qT}{2V} - k_0 T_s \exp\!\left(-\frac{E_R}{x_{2,k}+T_0}\right)\Big)(x_{1,k}+C_{A0})
+ \frac{q}{V} C_{Af} T_s}
{1+\frac{qT_s}{2V}}\\
&\quad + u_{1,k}T_s - C_{A0}, \\[6pt]
x_{2,k} =& 
\frac{(x_{2,k}+T_0)\Big(1-\frac{qT_s}{2V} - \frac{T_s UA}{2V\rho C_p}\Big)
+ T_s\Big(\frac{q}{V}T_f + \frac{UA}{V\rho C_p} U_{\mathrm{ctrl}}\Big)}
{1+\frac{qT_s}{2V}+\frac{T_s UA}{2V\rho C_p}} \\
&\quad
- \frac{(x_{1,k}+C_{A0})\,\Delta H\, k_0 T_s}{\rho C_p}
\exp\!\left(-\frac{E_R}{x_{2,k}+T_0}\right)
+ u_{2,k}T_s - T_0.
\end{align*}
\endgroup
The parameters are defined as follows: $\rho = 1000$, $C_p = 0.239$, $\Delta H = -5\times 10^4$, $E_R = 8750$, $k_0 = 7.2\times 10^{10}$, $UA = 5\times 10^4$, $q = 100$, $V = 100$, $T_f = 350$, $C_{Af} = 1$, $C_{A0} = 0.5$, $T_0 = 350$, $T_{c0} = 300$, and $T_s = 0.015$. The initial set is a zonotope $\mathcal{X}_0 =\zono{\begin{bmatrix} 0.5 & 0.4 \end{bmatrix}^\t,\text{diag}(\begin{bmatrix} 0.05  & 0.3 \end{bmatrix})}$. The input set $\mathcal{U}_k =\zono{\begin{bmatrix}-0.4 & 1.4\end{bmatrix}^\t,\text{diag}(\begin{bmatrix} 0.1 & 0.2 \end{bmatrix})}$ and the noise set $\mathcal{Z}_w=\zono{0,\begin{bmatrix}1 \times 10^{-4} & 1 \times 10^{-4}\end{bmatrix}^\t}$. We employ second-order polynomial lifting functions with bias terms defined as
\begin{align*}
\psi(x) &= 
[1,\, x_1,\, x_2,\, x_1^2,\, x_1 x_2,\, x_2^2]^\top, \\
\nu(x,u) &= 
[\,u_1,\, u_2]^\top.
\end{align*}
In Figure~\ref{fig:example0}, the reachable sets computed using CORA v2025 \cite{althoff2015introduction}, a least-squares (LS) data-driven model \cite{alanwar2021data}, and the proposed Koopman-based method all over-approximate the exact reachable sets (obtained via random simulations). However, the proposed method shows significantly reduced conservatism over longer prediction horizons.

\begin{figure}[thp]
    \centering
    \includegraphics[scale=0.45]{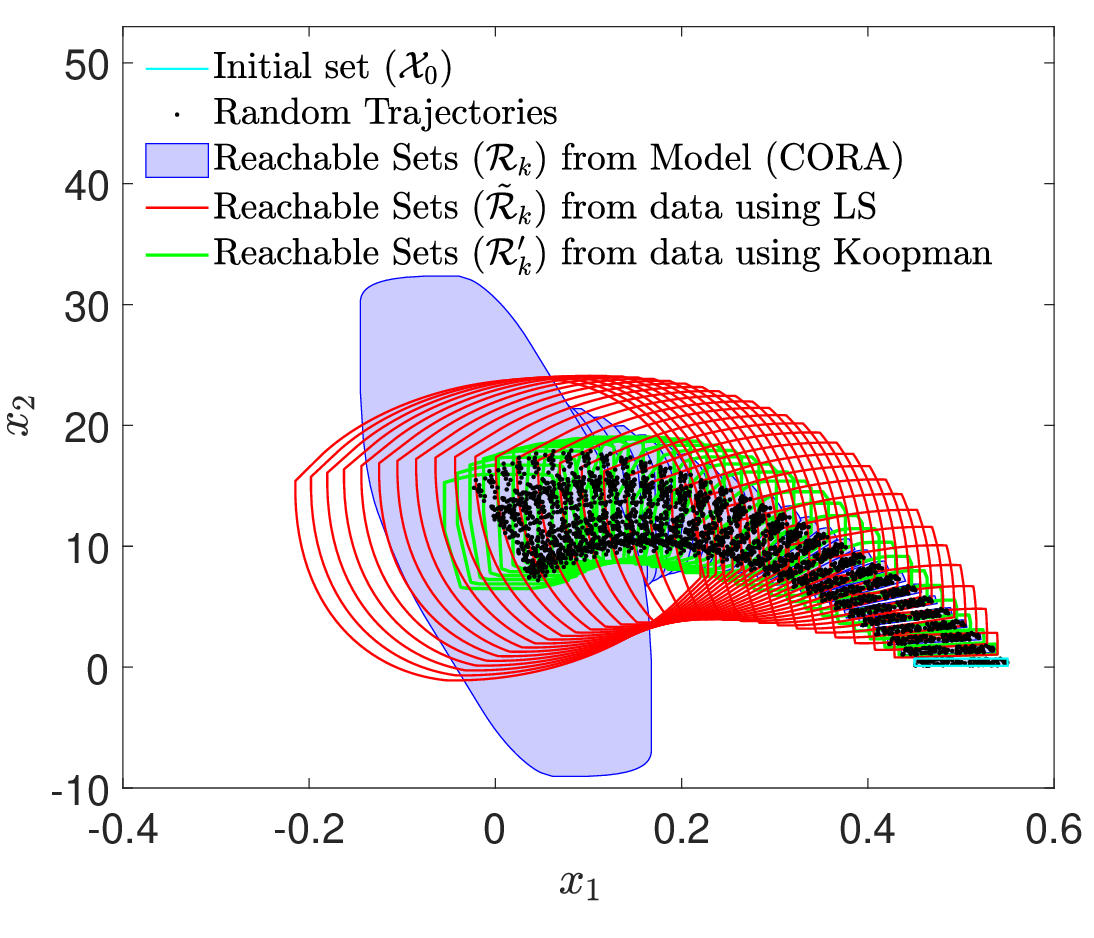}
   \caption{Comparison of computed reachable sets using CORA v2025 \cite{althoff2015introduction}, a least-squares (LS) data-driven model \cite{alanwar2021data}, and the proposed Koopman-based method, all consistent with noisy input–state data.}
    \label{fig:example0}
    \vspace{-4mm}
\end{figure}

\subsection{Example 2:Non-affine Lipschitz dynamic system}\label{ex:example2}
Consider the following discretized non-affine nonlinear system \cite{iacob2024koopman}

\begin{align*}
x_{1,k+1}
&=
x_{1,k} + T_s(\mu x_{1,k} - x_{1,k}) + T_s x_{1,k} e^{u_{1,k}}, \\\nonumber
x_{2,k+1}
&=
x_{2,k} + T_s\big(\lambda (x_{2,k} - x_{1,k}^2) - x_{2,k}\big)\\\nonumber
&\qquad + T_s\big(u_{1,k}u_{2,k} + x_{2,k} e^{u_{2,k}}\big),\nonumber
\end{align*}
where $\lambda=1$ and $\mu=-0.05$, $T_s=0.01$. For this nonlinear, non-affine system, we choose the following lifting functions

\begin{align*}
\psi(x) &= 
[1,\, x_1,\, x_2,\, x_1^2,\, x_1 x_2,\, x_2^2]^\top, \\
\nu(x,u) &= 
[\,u_1,\, u_2,\, x_1 u_1,\, x_1 u_2,\, x_2 u_1,\, x_2 u_2,\, u_1^2,\, u_1 u_2,\\
&\qquad u_2^2,\, x_1^2 u_1,\, x_2^2 u_2,\, 1\,]^\top.
\end{align*}
In Figure~\ref{fig:example2}, a comparison of reachable sets computed using a model-based approach with CORA v2025 \cite{althoff2015introduction}, a data-driven least-squares (LS) method \cite{alanwar2021data}, and the proposed method is presented. The results show that the proposed method yields less conservative over-approximations compared to the other approaches.

\begin{figure}[thp]
    \centering
    \includegraphics[scale=0.45]{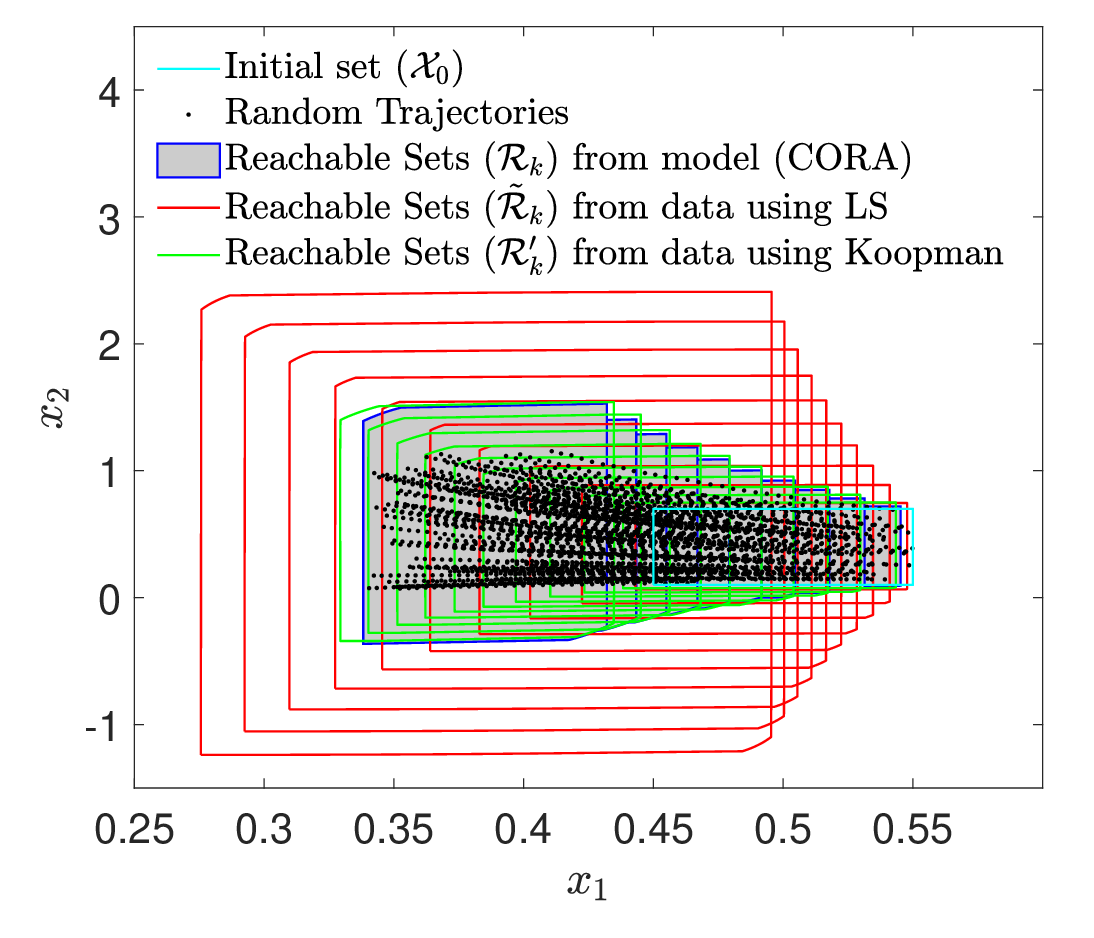}
    \caption{Comparison of computed reachable sets using CORA v2025 \cite{althoff2015introduction}, a least-squares (LS) data-driven model \cite{alanwar2021data}, and the proposed Koopman-based method, all consistent with noisy input–state data for the non-affine system.}
    \label{fig:example2}
    \vspace{-5mm}
\end{figure}

\subsection{Example 3: Autonomous Vehicle}
We utilized the JetRacer ROS AI Kit shown in Figure \ref{fig:jetracer}, an autonomous racing robot with an NVIDIA Jetson Nano Developer Kit (4GB RAM) and Raspberry Pi RP2040 dual-core microcontroller. The inputs to the vehicle are the linear ($u_1$) and angular velocity ($u_2$), and the outputs are the position ($[x_1,x_2]$) and heading ($x_3$) of the vehicle.\par
The selected lifting functions are as follows
\begin{align*}
\psi(x) &= 
[\,1,\, x_1,\, x_2,\, \sin(x_3),\, \cos(x_3),\\
&\qquad x_3,\, \sin^2(x_3),\, \cos^2(x_3)\,]^\top, \\
\nu(z,u) &= 
[\,u_1,\, u_2,\, z_5 u_2,\, z_4 u_2,\, u_2 \tan(u_1),\, u_2 u_1^2,\, z_6 u_1\,]^\top.
\end{align*}
The reachable sets computed using the method in \cite{alanwar2021data} and the proposed method are shown in Figure~\ref{fig:car1}. The proposed method is significantly less conservative over longer prediction horizons.

\begin{figure}[thp]
    \centering
    \includegraphics[scale=0.35]{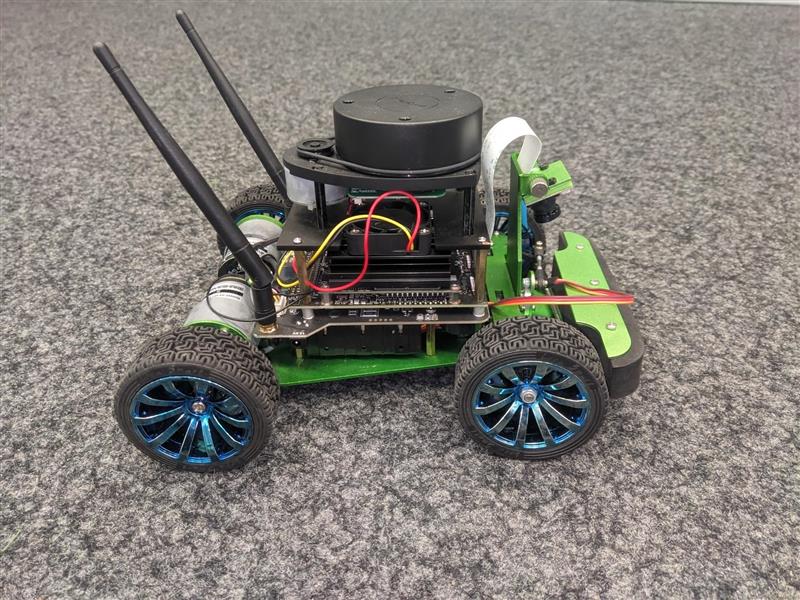}
    \caption{The JetRacer ROS AI Kit.}
    \label{fig:jetracer}
    \vspace{-4mm}
\end{figure}

\begin{figure}[thp]
    \centering
    \includegraphics[scale=0.45]{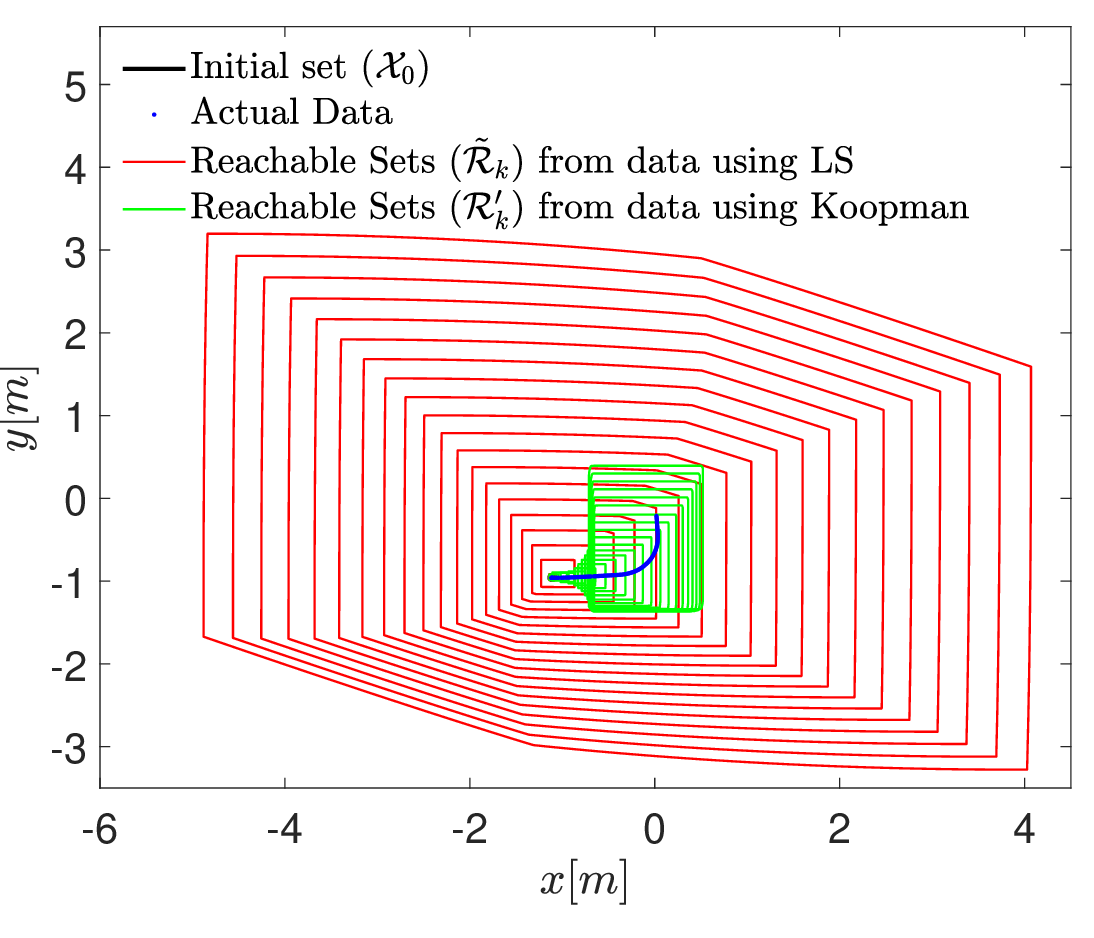}
    \caption{Comparison of reachable sets consistent with noisy input–state data using the LS method proposed in \cite{alanwar2021data} and the proposed Koopman method for real world autonomous car.}
    \label{fig:car1}
    \vspace{-5mm}
\end{figure}

\section{Conclusion}\label{sec:conclusion}
This paper presented a data-driven reachability analysis framework for nonlinear Lipschitz systems using Koopman operator embeddings under process noise. Unlike standard LTI Koopman representations, the proposed approach employs a state–input dependent lifting, enabling a more accurate linear representation of nonlinear dynamics. Reachable sets are propagated using zonotopic set representations, allowing for computationally efficient over-approximations. The results demonstrate that the proposed method yields significantly less conservative reachable sets compared to existing data-driven and model-based approaches, particularly over longer prediction horizons. This improvement stems from the ability of the state–input dependent Koopman formulation to capture nonlinear dynamics more accurately than LTI Koopman models and other linear regression-based methods. The effectiveness of the approach was validated through two numerical examples and real-world autonomous system experiments. The results consistently show that the proposed method achieves tighter and less conservative over-approximations of the reachable sets. In future works, we will extend the proposed method for time-varying nonlinear systems. 
\section{Acknowledgement}\label{sec:acknowledgement}
This work is funded by the Deutsche Forschungsgemeinschaft (DFG, German Research Foundation) – Project number: 564740977.

\bibliographystyle{IEEEtran}
{\small
\bibliography{IEEEabrv,ref}

\begin{thebibliography}{10}
\providecommand{\url}[1]{#1}
\csname url@samestyle\endcsname
\providecommand{\newblock}{\relax}
\providecommand{\bibinfo}[2]{#2}
\providecommand{\BIBentrySTDinterwordspacing}{\spaceskip=0pt\relax}
\providecommand{\BIBentryALTinterwordstretchfactor}{4}
\providecommand{\BIBentryALTinterwordspacing}{\spaceskip=\fontdimen2\font plus
\BIBentryALTinterwordstretchfactor\fontdimen3\font minus \fontdimen4\font\relax}
\providecommand{\BIBforeignlanguage}[2]{{%
\expandafter\ifx\csname l@#1\endcsname\relax
\typeout{** WARNING: IEEEtran.bst: No hyphenation pattern has been}%
\typeout{** loaded for the language `#1'. Using the pattern for}%
\typeout{** the default language instead.}%
\else
\language=\csname l@#1\endcsname
\fi
#2}}
\providecommand{\BIBdecl}{\relax}
\BIBdecl

\bibitem{ZHANG2025}
L.~Zhang, P.~Shi, S.~Wang, and X.~Su, ``A survey of safety control for service robots,'' \emph{Journal of Automation and Intelligence}, 2025.

\bibitem{Althoff2010}
M.~Althoff, ``Reachability analysis and its application to the safety assessment of autonomous cars,'' Ph.D. dissertation, Technische Universit{\"a}t M{\"u}nchen, 2010.

\bibitem{AlthoffDolan2014}
M.~Althoff and J.~M. Dolan, ``Online verification of automated road vehicles using reachability analysis,'' \emph{IEEE Transactions on Robotics}, 2014.

\bibitem{10068731}
A.~Alanwar, A.~Koch, F.~Allgöwer, and K.~H. Johansson, ``Data-driven reachability analysis from noisy data,'' \emph{IEEE Transactions on Automatic Control}, vol.~68, no.~5, pp. 3054--3069, 2023.

\bibitem{alanwar2021data}
A.~Alanwar, A.~Koch, F.~Allg{\"o}wer, and K.~H. Johansson, ``Data-driven reachability analysis using matrix zonotopes,'' in \emph{Learning for Dynamics and Control}.\hskip 1em plus 0.5em minus 0.4em\relax PMLR, 2021, pp. 163--175.

\bibitem{akhormeh2025online}
A.~N. Akhormeh, A.~Hegazy, and A.~Alanwar, ``Online data-driven reachability analysis using zonotopic recursive least squares,'' \emph{arXiv preprint arXiv:2509.17058}, 2025.

\bibitem{hafez2025safe}
A.~Hafez, A.~N. Akhormeh, A.~Hegazy, and A.~Alanwar, ``Safe llm-controlled robots with formal guarantees via reachability analysis,'' \emph{arXiv preprint arXiv:2503.03911}, 2025.

\bibitem{9833266}
M.~Selim, A.~Alanwar, S.~Kousik, G.~Gao, M.~Pavone, and K.~H. Johansson, ``Safe reinforcement learning using black-box reachability analysis,'' \emph{IEEE Robotics and Automation Letters}, vol.~7, no.~4, pp. 10\,665--10\,672, 2022.

\bibitem{girard2005reachability}
A.~Girard, ``Reachability of uncertain linear systems using zonotopes,'' in \emph{International workshop on hybrid systems: Computation and control}.\hskip 1em plus 0.5em minus 0.4em\relax Springer, 2005, pp. 291--305.

\bibitem{mamakoukas2019local}
G.~Mamakoukas, M.~Castano, X.~Tan, and T.~Murphey, ``Local koopman operators for data-driven control of robotic systems,'' in \emph{Robotics: science and systems}, 2019.

\bibitem{Str_sser_2026}
R.~Strässer, M.~Schaller, K.~Worthmann, J.~Berberich, and F.~Allgöwer, ``Safedmd: A koopman-based data-driven controller design framework for nonlinear dynamical systems,'' \emph{Automatica}, vol. 185, p. 112732, Mar. 2026.

\bibitem{proctor2018generalizing}
J.~L. Proctor, S.~L. Brunton, and J.~N. Kutz, ``Generalizing koopman theory to allow for inputs and control,'' \emph{SIAM Journal on Applied Dynamical Systems}, vol.~17, no.~1, pp. 909--930, 2018.

\bibitem{Boreale2025}
M.~Boreale and L.~Collodi, ``Linearization, model reduction and reachability in nonlinear odes,'' \emph{Formal Methods in System Design}, vol.~66, no.~3, pp. 419--454, 2025.

\bibitem{nath2026scalable}
D.~Nath, H.~Yin, and G.~Chou, ``Scalable data-driven reachability analysis and control via koopman operators with conformal coverage guarantees,'' \emph{arXiv preprint arXiv:2601.01076}, 2026.

\bibitem{KETEMA2025100339}
T.~Ketema, S.~L. Tilahun, S.~D. Zawka, and A.~Geletu, ``Deep koopman-based reachability analysis for data-driven predictive control of unknown nonlinear systems,'' \emph{IFAC Journal of Systems and Control}, vol.~34, p. 100339, 2025.

\bibitem{Bak2025}
S.~Bak, S.~Bogomolov, B.~Hencey, N.~Kochdumper, E.~Lew, and K.~Potomkin, ``Reachability of koopman linearized systems using explicit kernel approximation and polynomial zonotope refinement,'' \emph{Formal Methods in System Design}, vol.~66, no.~2, pp. 307--333, Aug 2025.

\bibitem{conf:zono1998}
W.~K{\"u}hn, ``Rigorously computed orbits of dynamical systems without the wrapping effect,'' \emph{Computing}, vol.~61, no.~1, pp. 47--67, 1998.

\bibitem{dahdah2022system}
S.~Dahdah and J.~R. Forbes, ``System norm regularization methods for koopman operator approximation,'' \emph{Proceedings of the Royal Society A}, vol. 478, no. 2265, p. 20220162, 2022.

\bibitem{iacob2024koopman}
L.~C. Iacob, R.~T{\'o}th, and M.~Schoukens, ``Koopman form of nonlinear systems with inputs,'' \emph{Automatica}, vol. 162, p. 111525, 2024.

\bibitem{kutz2016dynamic}
J.~N. Kutz, S.~L. Brunton, B.~W. Brunton, and J.~L. Proctor, \emph{Dynamic Mode Decomposition: Data-Driven Modeling of Complex Systems}.\hskip 1em plus 0.5em minus 0.4em\relax Philadelphia, PA: Society for Industrial and Applied Mathematics (SIAM), 2016.

\bibitem{Williamsetal2015}
M.~O. Williams, I.~G. Kevrekidis, and C.~W. Rowley, ``A data-driven approximation of the koopman operator: Extending dynamic mode decomposition,'' \emph{Journal of Nonlinear Science}, vol.~25, no.~6, pp. 1307--1346, 2015.

\bibitem{conf:taylor}
M.~Berz and G.~Hoffst{\"a}tter, ``Computation and application of {Taylor} polynomials with interval remainder bounds,'' \emph{Reliable Computing}, vol.~4, no.~1, pp. 83--97, 1998.

\bibitem{althoff2015introduction}
M.~Althoff, ``An introduction to cora 2015,'' in \emph{Proc. of the workshop on applied verification for continuous and hybrid systems}, 2015, pp. 120--151.

\bibitem{conf:nonlinearexample}
J.~M. Bravo, T.~Alamo, and E.~F. Camacho, ``Robust {MPC} of constrained discrete-time nonlinear systems based on approximated reachable sets,'' \emph{Automatica}, vol.~42, no.~10, pp. 1745--1751, 2006.

\end{thebibliography}
}

\end{document}